\documentclass[notitle,showpacs,twocolumn]{revtex4}
\usepackage{graphicx}
\usepackage{tikz-cd}
\usepackage{dcolumn}
\usepackage{amsmath}
\usepackage{amssymb}
\usepackage{mathtools}
\usepackage{hyperref}

\newtheorem{theorem}{Theorem}[section]

\newtheorem{proposition}[theorem]{Proposition}

\newtheorem{definition}[theorem]{Definition}
\newenvironment{proof}[1][Proof]{\begin{trivlist}
\item[\hskip \labelsep {\bfseries #1}]}{\end{trivlist}}

\newcommand{\qed}{\nobreak \ifvmode \relax \else
	\ifdim\lastskip<1.5em \hskip- \lastskip
	\hskip 0.5em plus0em minus0.5em \fi \nobreak
	\vrule height0.75em width0.5em depth0.25em\fi}

\begin{document}

\title{Non-existence of expansion-free dynamical stars with rotation and spatial twist}

\author{Abbas \surname{Sherif}}
\email{abbasmsherif25@gmail.com}
\affiliation{Astrophysics and Cosmology Research Unit, School of Mathematics, Statistics and Computer Science, University of KwaZulu-Natal, Private Bag X54001, Durban 4000, South Africa}

\author{Rituparno \surname{Goswami}}
\email{vitasta9@gmail.com}
\affiliation{Astrophysics and Cosmology Research Unit, School of Mathematics, Statistics and Computer Science, University of KwaZulu-Natal, Private Bag X54001, Durban 4000, South Africa}

\author{Sunil \surname{Maharaj}}
\email{maharaj@ukzn.ac.za}
\affiliation{Astrophysics and Cosmology Research Unit, School of Mathematics, Statistics and Computer Science, University of KwaZulu-Natal, Private Bag X54001, Durban 4000, South Africa}

\begin{abstract}
Extending a previous work by the same authors, we investigate the existence of expansion-free dynamical stars with non-zero spatial twist and rotation and show that such stars cannot exist. Firstly, it is shown that a rotating expansion-free dynamical star with zero twist cannot exist. This is due to the fact that such stars cannot radiate and they are shear-free, in which case the energy density \(\rho\) is time independent. Secondly, we prove that a non-rotating expansion-free dynamical star with non-zero spatial twist also cannot exist, as either the strong energy condition must be violated, i.e. \(\rho+3p<0\), or the star must be shear-free in which case the star is static (\(\Theta=\Omega=\Sigma=0\)). Finally, if we insist that the rotation and spatial twist are simultaneously non-zero, then the star cannot be shear-free in which case we obtain a quadratic polynomial equation in \(\phi\) and \(\Sigma\) with no real solutions. Therefore such stars cannot exist.
\end{abstract}
\maketitle

\section{Introduction}

Models of radiating stars in general relativity play a central role in the study of gravitational collapse and the astrophysics of gravitating bodies. Physically relevant exact models were obtained by Tewari and Charan \cite{tac1}, Tewari \cite{tac2}, and Ivanov \cite{iv1,iv2,iv3}. These examples provide interesting insights into the processes involved during stellar evolution. It has also been anisotrophy and dissipative effects during gravitational collapse have influence on the collapse rate and temperature profiles in radiating stars by Reddy \textit{et al.} \cite{red1}. There are classes of exact solutions to the Einstein's field equations (EFEs) obtained, referred to as Euclidean stars, which have been shown to regain Newtonian stars within the appropriate limit \cite{sant1,gov1,gov2}. In recent years the method of Lie analysis of differential equations using symmetry invariance has proved an invaluable and systematic tool in obtaining general categories of exact solutions to the boundary condition of radiating stellar objects \cite{gov3,moh1,moh2}. There is an important class of radiating stars that has been introduced by Herrera \textit{et al.} \cite{lh1} which are expansion-free. Expansion-free dynamical models implies the existence of a cavity or void. One important feature of expansion-free models is that matter distributions with a vanishing expansion scalar have to be inhomogeneous. These physical features should have important astrophysical consequences for spherically symmetric distributions. Also, such radiating astrophysical model might offer a plausible explanation into the existence of voids that have been observed on the cosmological scales. Various authors have explored expansion-free dynamical models with different considerations. Some of the studies containing the description of physical properties of expansion-free dynamical radiating stars can be found in several works \cite{lh5,kum1,kum2}. The peak in interest in these models laid in the fact that in such models is the possibility that they could help explain the existence of voids on cosmological scales. In 2008, Herrera and co-authors \cite{lh1} studied such models with non-zero shear and showed that the appearance of a cavity (see reference \cite{pp1} for more discussion), with matter which is anisotropic and dissipative, undergoing explosion is inevitable. The same authors followed this result by a 2009 paper in \cite{lh2} in which they ruled out the Skripkin expansion-free dynamcal model (see reference \cite{skr1}) with constant energy density and isotropic pressure. Another study in \cite{lh3} involved the study of models collapsing adiabatically, and showed that the instability was independent of the star's stiffness. In particular, it was shown that the instability is entirely governed by the pressures and the radial profile of the energy density. In a recent work by Sherif \textit{et al.} \cite{asr1} the authors employed, for the first time, the \(1+1+2\) formalism (a semitetrad covariant method for anlyzing the field equations) to study the properties of expansion free models. With emphasis on non-rotating and non-twisting stars the authors found that a necessary condition for the existence of such stars is that the star simultaneously accelerate and radiate. It was also shown in the same paper that these stars must possess a conformally flat geometry. 

In this paper we study the required geometric and thermodynamic properties for the existence of a relativistic expansion-free dynamical star with at least the rotation or spatial twist being non-zero. This analysis falls in the scope of stability analysis of self-gravitating systems (some of the references are given in \cite{mr1,jw1,nos1,gmm1,rittt1}). Our approach will be to fix either of the rotation or the spatial twist to zero and see whether indeed expansion-free dynamical models of such stars can exist. In particular we would like to know the restrictions that the addition of spatial twist or/and rotation induces on the geometric and matter quanties such as acceleration, heat flux, e.t.c. As with our previous work \cite{asr1}, we will make use of the equivalent forms of the field equations from the \(1+1+2\) semi-tetrad covariant formulation of general relativity \cite{pg1,ts1,ts2,ts3,cc1,rit1}. The semi-tetrad formalism has proven to be an extremely useful approach in displaying geometrical features in a transparent fashion which are generally very difficult to find using other approaches.

In section \ref{02} we briefly introduce the \(1+1+2\) semi-tetrad formalism and provide a definition of locally rotationally symmetric (LRS) spacetimes. In section \ref{03} we present the results of the paper, a complete analysis of the expansion-free model with rotation and spatial twist. We conclude with a discussion of the results in section \ref{04}.

\section{Locally rotationally symmetric spacetimes and its \(1+1+2\) semitetrad splitting}\label{02}

We provide some background material in this section, covering the \(1+1+2\) semi-tetrad covariant formalism as well as notes on and calculations of useful quantities, utilized in this paper. 

Stellar models that are rotating and twisting can be studied using the models of spacetimes known as \textit{locally rotationally symmetric} spacetimes  \cite{ggff1,sge1}. As such we will use this model here to investigate expansion free-dynamical stars that are either rotating or possessing spatial twist or both. We start by explicitly defining LRS spacetimes.
\begin{definition}
A \textbf{locally rotationally symmetric (LRS)} spacetime is a spacetime in which at each point \(p\in M\), there exists a continuous isotropy group generating a multiply transitive isometry group on \(M\) \cite{crb1,gbc1,sge1,vege1, ssgos1,ssgos2}. The general metric of LRS spacetimes is given by

\begin{eqnarray}\label{jan29191}
\begin{split}
ds^2&=-A^2dt^2 + B^2d\chi^2 + F^2 dy^2 \\
&+ \left[\left(F\bar{D}\right)^2+ \left(Bh\right)^2 - \left(Ag\right)^2\right]dz^2\\ 
&+ \left(A^2gdt - B^2hd\chi\right)dz,
\end{split}
\end{eqnarray}
where \(A^2,B^2,F^2\) are functions of \(t\) and \(\chi\), \(\bar{D}^2\) is a function of \(y\) and \(k\) (\(k\) fixes the geometry of the \(2\)-surfaces), and \(g,h\) are functions of \(y\). 
\end{definition}
In the limiting case that \(g=h=0\) we recover the well known spherically symmetric LRS II class of spacetimes which generalizes spherically symmetric solutions to Einstein field equations (EFEs). These spacetimes with vanishing rotation and spatial twist was employed in \cite{asr1} to study expansion-free and dynamic stellar models. LRS spacetimes, a generalization of LRS II spacetimes, on the other hand include solutions with nonzero vorticity and nonzero spatial twist. Some of these solutions include the G\"{o}del's world model, the Kantowski-Sachs models and the Bianchi models, invariant under the \(G_3\) groups of types I, II, VIII and IX (see for example the reference \cite{gft2}). In fact the G\"{o}del's world model, a famous albeit unphysical solution to the field equations \cite{}, is an expansion-free model that is rotating with zero spatial twist. Properties of such expansion-free dynamical stars will be investigated in section \ref{sub:1} and properties necessary for their existence will be determined.

Let us next introduce the \(1+1+2\) covariant splitting of spacetime and the resulting fields equations for LRS spacetimes, as well as derivatives of the unit vector fields \cite{cc1,rit1}. 

To start with, let (\(M,g_{ab}\)) be a spacetime manifold, with associated metric tensor \(g_{ab}\). To any timelike congruence of an observer we may associate a unit vector field \(u^a\) tangent to the congruence which satisfies \(u^au_a=-1\). One may then split \(M\) as follows: Given any \(4\)-vector \(U^a\) in the spacetime, the projection tensor \(h_a^{\ b}\equiv g_a^{\ b}+u_au^b\), projects \(U^a\) onto the \(3\)-space as
\begin{eqnarray*}
U^a&=&Uu^a + U^{\langle a \rangle },
\end{eqnarray*}
where \(U\) is the scalar along \(u^a\) and \(U^{\langle a \rangle }\) is the projected \(3\)-vector \cite{ggff2}. The splitting splits \(g_{ab}\) into components associated with the \(u^a\) and spatial directions. This naturally gives rise to two derivatives:
\begin{itemize}
\item The \textit{covariant time derivative} (or simply the dot derivative)  along the observers' congruence. Given any tensor \(S^{a..b}_{\ \ \ \ c..d}\), we have \(\dot{S}^{a..b}_{\ \ \ \ c..d}\equiv u^e\nabla_eS^{a..b}_{\ \ \ \ c..d}\).

\item Fully orthogonally \textit{projected covariant derivative} \(D\) with the tensor \(h_{ab}\), with the total projection carried out on all the free indices. Given any tensor \(S^{a..b}_{\ \ \ \ c..d}\), we have \(D_eS^{a..b}_{\ \ \ \ c..d}\equiv h^a_{\ f}h^p_{\ c}...h^b_{\ g}h^q_{\ d}h^r_{\ e}\nabla_rS^{f..g}_{\ \ \ \ p..q}\).
\end{itemize}
This \(1+3\) splitting of the spacetime irreducibly splits the covariant derivative of \(u^a\) as
\begin{eqnarray}\label{mmmn}
\nabla_au_b=-A_au_b+\frac{1}{3}h_{ab}\Theta+\sigma_{ab}.
\end{eqnarray}
In \eqref{mmmn}, the vector \(A_a=\dot{u}_a\) is the acceleration vector, \(\Theta\equiv D_au^a\) - the trace of the fully orthogonally projected covariant derivative of \(u^a\) - is the expansion and \(\sigma_{ab}=D_{\langle b}u_{a\rangle}\) is the shear tensor (wherever used in this paper, angle brackets will denote the projected symmetric trace-free part of the tensor). In the particular case of LRS spacetimes, all vector and tensor quantities vanish identically (see reference \cite{cc1} for details). 

The splitting further allows for the energy momentum tensor to be decomposed as
\begin{eqnarray}
T_{ab}=\rho u_au_b + 2q_{(a}u_{b)} +ph_{ab} + \pi_{ab},
\end{eqnarray}
where \(\rho\equiv T_{ab}u^au^b\) is the energy density, \(q_a=-h_a^{\ c}T_{cd}u^d\) is the \(3\)-vector defining the heat flux, \(p\equiv\left(1/3\right)h^{ab}T_{ab}\) is the isotropic pressure and \(\pi_{ab}\) is the anisotropic stress tensor.

If there is a preferred unit normal spatial direction \(e^a\) which is the case with LRS II spacetimes, the metric \(g_{ab}\) can be split into terms along the \(u^a\) and \(e^a\) directions (the vector field \(e^a\) splits the \(3\)-space), as well as on the \(2\)-surface, i.e. 
\begin{eqnarray}
g_{ab}=N_{ab}-u_au_b+e_ae_b,
\end{eqnarray} 
where the projection tensor \(N_{ab}\) projects any two vector orthogonal to \(u^a\) and \(e^a\) onto the \(2\)-surface defined by the sheet (\(N^{\ \ a}_a=2, u^aN_{ab}=0, \ e^aN_{ab}=0\)), and \(e^a\) is defined such that \(e^ae_a=1\) and it is orthogonal to \(u^a\), i.e. \(u^ae_a=0\). This is referred to as the \(1+1+2\) splitting. This splitting of the spacetime additionally gives rise to the splitting of the covariant derivatives along the \(e^a\) direction and on the \(2\)-surface:
\begin{itemize}
\item The \textit{hat derivative} is the spatial derivative along the vector field \(e^a\): Given a \(3\)-tensor \(\psi_{a..b}^{\ \ \ \ c..d}\), we have \(\hat{\psi}_{a..b}^{\ \ \ \ c..d}\equiv e^fD_f\psi_{a..b}^{\ \ \ \ c..d}\).

\item The \textit{delta derivative} is the projected spatial derivative on the \(2\)-sheet (projection by the tensor \(N_a^{\ b}\)), and the projection is carried out on all the free indices: Given any \(3\)-tensor \(\psi_{a..b}^{\ \ \ \ c..d}\), we have \(\delta_e\psi_{a..b}^{\ \ \ \ c..d}\equiv N_a^{\ f}..N_b^{\ g}N_h^{\ c}..N_i^{\ d}N_e^{\ j}D_j\psi_{f..g}^{\ \ \ \ h..i}\).
\end{itemize} 

The complete set of \(1+1+2\) covariant scalars fully describing the LRS class of spacetimes are \cite{cc1}
\begin{eqnarray*}
\lbrace{A,\Theta,\phi, \Sigma, \mathcal{E}, \mathcal{H}, \rho, p, \Pi, Q, \Omega, \xi\rbrace}. 
\end{eqnarray*}
The quantity \(\phi\equiv\delta_ae^a\) is the sheet expansion, \(\Sigma\equiv\sigma_{ab}e^ae^b\) is the scalar associated with the shear tensor \(\sigma_{ab}\), \(\mathcal{E}\equiv E_{ab}e^ae^b\) is the scalar associated with the electric part of the Weyl tensor \(E_{ab}\), \(\mathcal{H}\equiv H_{ab}e^ae^b\) is the scalar associated with the magnetic part of the Weyl tensor \(\mathcal{H}_{ab}\), \(\Pi\equiv\pi_{ab}e^ae^b\) is the anisotropic stress scalar, \(Q\equiv -e^aT_{ab}u^b=q_ae^a\) is the scalar associated to the heat flux vector \(q_a\). The quantities \(\xi\) and \(\Omega\) are the spatial twist and rotation scalar respectively, which are defined by \(\xi=\left(1/2\right)\varepsilon^{ab}\delta_ae_b\) (where \(\varepsilon_{ab}\equiv \varepsilon_{abc}e^c= u^d\eta_{dabcd}e^c\) is the levi civita \(2\)-tensor, the volume element of the \(2\)-surface) and \(\Omega=e^a\omega_a\) (where \(\omega^a=\Omega e^a + \Omega^a\) is the rotation vector, with \(\Omega^a\) being the sheet component of \(\omega^a\)).

The full covariant derivatives of the vector fields \(u^a\) and \(e^a\) are given by \cite{cc1}
\begin{subequations}\label{4}
\begin{align}
\nabla_au_b&=-Au_ae_b + e_ae_b\left(\frac{1}{3}\Theta + \Sigma\right) \\
&+ N_{ab}\left(\frac{1}{3}\Theta -\frac{1}{2}\Sigma\right),\label{4}\\
\nabla_ae_b&=-Au_au_b + \left(\frac{1}{3}\Theta + \Sigma\right)e_au_b +\frac{1}{2}\phi N_{ab}.\label{444}
\end{align}
\end{subequations}
We also note the useful expression 
\begin{eqnarray}\label{redpen}
\hat{u}^a&=&\left(\frac{1}{3}\Theta+\Sigma\right)e^a.
\end{eqnarray}

Any given scalar \(\psi\) satisfies the commutation relation

\begin{eqnarray}\label{ghh1}
\hat{\dot{\psi}}-\hat{\dot{\psi}}=-A\dot{\psi}+\left(\frac{1}{3}\Theta+\Sigma\right)\hat{\psi}.
\end{eqnarray}
We will utilize this relation throughout this work when seeking constraint equations. The field equations for LRS spacetimes are given as propagation and evolution of the covariant scalars \cite{cc1}:

\begin{itemize}

\item \textit{Evolution}
\begin{subequations}
\begin{align}
\frac{2}{3}\dot{\Theta}-\dot{\Sigma}&=A\phi- \frac{1}{2}\left(\frac{2}{3}\Theta-\Sigma\right)^2 - 2\Omega^2 + \mathcal{E} - \frac{1}{2}\Pi \notag \\
& - \frac{1}{3}\left(\rho+3p\right),\label{sube1}\\
\dot{\phi}&=\left(\frac{2}{3}\Theta-\Sigma\right)\left(A-\frac{1}{2}\phi\right) + 2\xi\Omega + Q,\label{sube2}\\
\dot{\xi}&=-\frac{1}{2}\left(\frac{2}{3}\Theta-\Sigma\right)\xi + \left(A-\frac{1}{2}\phi\right)\Omega,\label{sube3}\\
\dot{\Omega}&=A\xi-\left(\frac{2}{3}\Theta-\Sigma\right)\Omega,\label{sube4}\\
\dot{\mathcal{H}}&=-3\xi\mathcal{E}-\frac{3}{2}\left(\frac{2}{3}\Theta-\Sigma\right)\mathcal{H}+\Omega Q,\label{sube5}\\
\dot{\mathcal{E}}-\frac{1}{3}\dot{\rho}+\frac{1}{2}\dot{\Pi}&=-\left(\frac{2}{3}\Theta-\Sigma\right)\left(\frac{3}{2}\mathcal{E}+\frac{1}{4}\Pi\right)+\frac{1}{2}\phi Q\notag\\
&+3\xi\mathcal{H}+\frac{1}{2}\left(\frac{2}{3}\Theta-\Sigma\right)\left(\rho+p\right),\label{sube6}
\end{align}
\end{subequations}
\item \textit{Propagation}
\begin{subequations}
\begin{align}
\frac{2}{3}\hat{\Theta}-\hat{\Sigma}&=\frac{3}{2}\phi\Sigma + 2\xi\Omega + Q,\label{sube7}\\
\hat{\phi}&=-\frac{1}{2}\phi^2 + \left(\frac{1}{3}\Theta+\Sigma\right)\left(\frac{2}{3}\Theta-\Sigma\right)+2\xi^2\notag\\
&-\frac{2}{3}\rho-\mathcal{E}-\frac{1}{2}\Pi,\label{sube8}\\
\hat{\xi}&=-\phi\xi + \left(\frac{1}{3}\Theta+\Sigma\right)\Omega,\label{sube9}\\
\hat{\Omega}&=\left(A-\phi\right)\Omega,\label{sube10}\\
\hat{\mathcal{H}}&=-\left(3\mathcal{E}+\rho+p-\frac{1}{2}\Pi\right)\Omega-3\phi\mathcal{H}\notag\\
&-Q\xi,\label{sube11}\\
\hat{\mathcal{E}}-\frac{1}{3}\hat{\rho}+\frac{1}{2}\hat{\Pi}&=-\frac{3}{2}\phi\left(\mathcal{E}+\frac{1}{2}\Pi\right)-\frac{1}{2}\left(\frac{2}{3}\Theta-\Sigma\right)Q\notag\\
&+3\Omega\mathcal{H}\label{sube12}
\end{align}
\end{subequations}
\item \textit{Evolution/Propagation}
\begin{subequations}
\begin{align}
\hat{A}-\dot{\Theta}&=-\left(A+\phi\right)A-\frac{1}{3}\Theta^2+\frac{3}{2}\Sigma^2-2\Omega^2\notag\\
&+\frac{1}{2}\left(\rho+3p\right),\label{sube13}\\
\dot{\rho}+\hat{Q}&=-\Theta\left(\rho+p\right)-\left(2A+\phi\right)Q-\frac{3}{2}\Sigma\Pi,\label{sube14}\\
\dot{Q}+\hat{p}+\hat{\Pi}&=-\left(A+\frac{3}{2}\phi\right)\Pi-\left(\frac{4}{3}\Theta+\Sigma\right)Q\notag\\
&-\left(\rho+p\right)A,\label{sube15}
\end{align}
\end{subequations}
\item \textit{Constraint}
\begin{eqnarray}\label{ggh10}
\mathcal{H}=3\Sigma\xi-\left(2\mathcal{A}-\phi\right)\Omega.
\end{eqnarray}

\end{itemize}
Let us now analyze the expansion-free dynamical models with rotation and spatial twist.

\section{Results}\label{03}

In \cite{asr1} we considered expansion-free dynamical stars that are non-rotating and non-twisting. It was shown that the existence of such models requires the star to simultaneously accelerate and radiate, in which case the star is necessarily conformally flat. We consider here the case in which at least either one of \(\Omega\) or \(\xi\) is non-vanishing. Thus we are considering the following three cases \cite{vege1,sge1,ssgos1}: 

\begin{itemize}
\item[1.] \(\xi=0;\Omega\neq 0\): These models fall under the class of spacetimes known as LRS I spacetimes, with \(e^a\) hypersurface orthogonal and \(u^a\) twisting. A well know example is the G\(\ddot{\text{o}}\)del solution. 
\item[2.] \(\xi\neq 0;\Omega= 0\): These models fall under the class of spacetimes known as LRS III spacetimes, with \(e^a\) twisting and \(u^a\) hypersurface orthogonal. 
\item[3.] \(\xi\neq 0;\Omega\neq 0\): These models, investigated in \cite{ssgos1}, have the property that the heat flux \(Q\) cannot be zero and specific energy conditions are required to be satisfied, i.e.

\begin{eqnarray}\label{gg}
-\frac{1}{2}\left(\rho+p+\Pi\right)<Q<\frac{1}{2}\left(\rho+p+\Pi\right).
\end{eqnarray} 
\end{itemize}
One therefore expects that an expansion-free dynamical model to exist in models with \(\xi\neq 0\) and \(\Omega\neq 0\). The star being dynamical implies that all of the thermodynamic quantities, including \(p,\rho,\Pi, Q\) e.t.c., are functions of time.

\subsection{Case 1: \(\xi=0;\Omega\neq 0\)}\label{sub:1}

Let us start by considering the case of a rotating expansion-free star with no spatial twist. From \eqref{sube9} we have

\begin{eqnarray}\label{ggh11}
0=\Sigma\Omega.
\end{eqnarray}
Since by assumption, \(\Omega\neq 0\), we must have \(\Sigma=0\). Furthermore, from \eqref{sube3} we obtain
\begin{eqnarray}\label{ggh12}
0=\left(A-\frac{1}{2}\phi\right)\Omega,
\end{eqnarray}
which, from \eqref{ggh10}, gives \(\mathcal{H}=0\), so that for such stars the Weyl tensor is purely electric. Using \eqref{sube5} one has

\begin{eqnarray}\label{ggh13}
0=\Omega Q,
\end{eqnarray}
from which we obtain \(Q=0\). Therefore the star is not dynamical as the energy density is time independent, i.e. \(\dot{\rho}=0\) from \eqref{sube14}. It is also not difficult to show that such stars will necessarily accelerate. To see this, assume \(A=0\). Then from \eqref{ggh12}, since by assumption \(\Omega\neq 0\) we must have \(\phi=0\) as well. From \eqref{sube1}, \eqref{sube8}, \eqref{sube11} and \eqref{sube13} we obtain the constraints

\begin{subequations}
\begin{align}
0&=-2\Omega^2+\mathcal{E}-\frac{1}{3}\left(\rho+3p\right)-\frac{1}{2}\Pi,\label{sube16}\\
0&=\mathcal{E}+\frac{2}{3}\rho+\frac{1}{2}\Pi,\label{sube17}\\
0&=3\mathcal{E}+\rho+p-\frac{1}{2}\Pi,\label{sube18}\\
0&=-2\Omega^2+\frac{1}{2}\left(\rho+3p\right).\label{sube19}
\end{align}
\end{subequations}
Comparing \eqref{sube17} and \eqref{sube18} we obtain

\begin{eqnarray}\label{ggh14}
0=-\rho+p-2\Pi.
\end{eqnarray}
Substituting \eqref{ggh14} and \eqref{sube17} into \eqref{sube16} we obtain the constraint

\begin{eqnarray}\label{ggh15}
0=-2\Omega^2-\frac{1}{2}\left(\rho+3p\right),
\end{eqnarray}
which upon comparing to \eqref{sube19} gives 

\begin{eqnarray}\label{ggh15}
0=\rho+3p.
\end{eqnarray}
Therefore \(\Omega=0\) (from either \eqref{sube19} or \eqref{ggh15}), which contradicts the assumption that \(\Omega\neq 0\). Hence we have \(A\neq 0\). In summary,

\begin{theorem}\label{th2}
There cannot exist an expansion-free dynamical star with vanishing spatial twist and non-zero rotation.
\end{theorem}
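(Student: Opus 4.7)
The plan is to suppose for contradiction that a dynamical expansion-free star with \(\Theta=0\), \(\xi=0\), and \(\Omega\neq 0\) exists, and then use the \(1+1+2\) propagation and evolution equations in a short chain to deduce that \(\dot\rho=0\), which contradicts the star being dynamical. The guiding observation is that once the twist-evolution equations are made consistent with a nonvanishing rotation, the kinematic scalars and the magnetic Weyl scalar are forced to be so restrictive that energy conservation permits no time dependence in \(\rho\).

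The first step is to isolate the scalar shear. Substituting \(\Theta=\xi=0\) into the propagation equation \eqref{sube9} reduces it to \(\Sigma\Omega=0\), and since \(\Omega\neq 0\) by hypothesis this gives \(\Sigma=0\). Feeding \(\Theta=\xi=\Sigma=0\) into the twist evolution equation \eqref{sube3} then produces \(\bigl(A-\tfrac{1}{2}\phi\bigr)\Omega=0\), so \(A=\tfrac{1}{2}\phi\). The magnetic Weyl constraint \eqref{ggh10} immediately collapses to \(\mathcal{H}=-(2A-\phi)\Omega=0\), meaning the Weyl tensor must be purely electric in this configuration.

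With \(\mathcal{H}=0\) and the kinematic scalars \(\Theta,\xi,\Sigma\) all vanishing, I would next substitute into the evolution equation \eqref{sube5} for \(\dot{\mathcal{H}}\), which reduces to \(\Omega Q=0\); since \(\Omega\neq 0\), this forces \(Q=0\). Finally, substituting \(\Theta=\Sigma=Q=0\) into the energy conservation law \eqref{sube14} yields \(\dot\rho=0\), giving a direct contradiction with the assumption that the star is dynamical and completing the argument.

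The main obstacle I anticipate is conceptual rather than computational: one needs to pin down what it means for the star to be \emph{dynamical} in order for the contradiction to land cleanly. I would take dynamicity to require at least one thermodynamic scalar (and in particular \(\rho\)) to have genuine time dependence, at which point equation \eqref{sube14} closes the loop. Beyond that, each step of the chain reduces a field equation to a product of two factors, one of which is nonzero by hypothesis, so the argument carries no nonlinear or analytic subtleties; the only care required is to perform the substitutions in an order that keeps each intermediate equation genuinely constraining rather than trivially satisfied.
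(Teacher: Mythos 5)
Your proposal is correct and follows essentially the same chain as the paper: \eqref{sube9} gives \(\Sigma=0\), \eqref{sube3} gives \(A=\tfrac{1}{2}\phi\), the constraint \eqref{ggh10} gives \(\mathcal{H}=0\), \eqref{sube5} then forces \(Q=0\), and \eqref{sube14} yields \(\dot\rho=0\), contradicting dynamicity. The only content of the paper's proof you omit is a supplementary observation (that such non-dynamical configurations must additionally have \(A\neq 0\)), which is not needed for the non-existence conclusion itself.
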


Though these stars are not dynamical, we have enumerated several properties we expect such stars to have. In particular, the star is shear-free and accelerates without radiating.

\subsection{Case 2: \(\xi\neq 0;\Omega= 0\)}

Next, we consider non-rotating expansion-free dynamical stars with non-zero spatial twist. We state and prove the following

\begin{theorem}\label{th2}
There cannot exist an expansion-free dynamical star with vanishing rotation and non-zero spatial twist.
\end{theorem}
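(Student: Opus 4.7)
The plan is to argue by exhaustion of cases, pivoting on the value of the shear scalar $\Sigma$. Under the hypotheses $\Theta=0$, $\Omega=0$, $\xi\neq 0$, with all matter scalars time dependent, I would first exploit the evolution equation \eqref{sube4}: with $\Theta=0$ and $\Omega=0$ it collapses to $0=A\xi$, and since $\xi\neq 0$ this forces $A=0$, so the star cannot accelerate. Feeding $A=\Theta=\Omega=0$ into the Raychaudhuri-type equation \eqref{sube13} then yields the clean algebraic relation $\rho+3p=-3\Sigma^{2}$, which drives the rest of the argument.

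From here the dichotomy on $\Sigma$ is immediate. If $\Sigma\neq 0$, the right-hand side above is strictly negative, so $\rho+3p<0$ and the strong energy condition fails, ruling out the shearing branch on physical grounds. For the shear-free branch $\Sigma\equiv 0$, substituting $\Sigma=0$, $\Theta=0$, and $\Omega=0$ into the propagation equation \eqref{sube7} forces $Q=0$, so the star cannot radiate; the matter conservation equation \eqref{sube14}, specialized to $\Theta=\Sigma=Q=A=0$, then reduces to $\dot{\rho}=0$. The energy density is thus time independent, which directly contradicts the dynamical assumption. In this branch the remaining kinematic scalars satisfy $\Theta=\Omega=\Sigma=0$, precisely the static characterization foreshadowed in the abstract.

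Combining the two branches yields the theorem. The only step requiring genuine care is the shear-free case: $\Sigma\equiv 0$ has to be read as an identity throughout the fluid, so that $\hat{\Sigma}=0$ globally and the extraction $Q=0$ from \eqref{sube7} is legitimate spacetime-wide rather than merely pointwise. Beyond that, everything is direct algebraic substitution into the $1+1+2$ evolution and propagation equations, with no integration required; the main conceptual point is spotting that the rotation evolution \eqref{sube4} pins the acceleration to zero, after which the contradiction falls out in one line from \eqref{sube13}.
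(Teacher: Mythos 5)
Your proposal follows essentially the same route as the paper: equation \eqref{sube4} forces $A=0$, equation \eqref{sube13} then gives $\Sigma^{2}=-\tfrac{1}{3}(\rho+3p)$, and the dichotomy on $\Sigma$ yields either an SEC violation or the static/shear-free branch. Your only addition is to make the shear-free branch explicit by extracting $Q=0$ from \eqref{sube7} and $\dot{\rho}=0$ from \eqref{sube14}, a correct elaboration of a step the paper merely asserts.
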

\begin{proof}
To prove this, we will show that if such star is to exist, then the star will either be static or it will violate the strong energy condition (SEC). From \eqref{sube4} we have

\begin{eqnarray}\label{ggh16}
0=A\xi,
\end{eqnarray}
so we must have \(A=0\) since by assumption \(\xi\neq 0\). Using \eqref{sube13} we have

\begin{eqnarray}\label{ggh17}
\Sigma^2=-\frac{1}{3}\left(\rho+3p\right).
\end{eqnarray}
Thus for such a star to exist we must have \(\rho+3p<0\), except in the case that the star is shear-free, in which case the star is static (\(\Omega=\Theta=\Sigma=0\)).\qed
\end{proof}
In fact in this case we have shown that even the expansion-free condition cannot hold, not only that it is not dynamical.

\subsection{Case 3: \(\xi\neq 0;\Omega\neq 0\)}

Finally, we consider the case of a simultaneously rotating and twisting expansion-free dynamical star. We start by taking the dot derivative of \eqref{sube1} and the hat derivative of \eqref{sube7} and obtain respectively

\begin{subequations}
\begin{align}
-\hat{\dot{\Sigma}}&=\phi\hat{A}+A\hat{\phi}-\Sigma\hat{\sigma}-4\Omega\hat{\Omega}+\hat{\mathcal{E}}-\frac{1}{2}\hat{\Pi}-\frac{1}{3}\hat{\rho}-\hat{p}\notag\\
&=-A^2\phi-\frac{3}{2}A\phi^2-A\left(\frac{2}{3}\rho+\mathcal{E}+\frac{1}{2}\Pi\right)+3\phi\Sigma^2\notag\\
&+2\phi\Omega^2+\frac{1}{2}\phi\left(\rho+3p\right)+\frac{3}{2}\Sigma Q+2\Omega\Sigma\xi+3\Omega\mathcal{H}\notag\\
&-\frac{3}{2}\phi\left(\mathcal{E}+\frac{1}{2}\Pi\right)-\left(\hat{p}+\hat{\Pi}\right),\label{sube20}\\
-\dot{\hat{\Sigma}}&=\frac{3}{2}\Sigma\dot{\phi}+\frac{3}{2}\phi\dot{\Sigma}+2\Omega\dot{\xi}+2\xi\dot{\Omega}+\dot{Q}\notag\\
&=-\frac{3}{2}A\Sigma^2+\frac{3}{2}\phi\Sigma^2+6\Omega\Sigma\xi+\frac{3}{2}\Sigma Q-\frac{3}{2}A\phi^2\notag\\
&+2\phi\Omega^2-\frac{3}{2}\phi \mathcal{E}+\frac{3}{4}\phi\Pi+\frac{1}{2}\phi\left(\rho+3p\right)+2A\Omega^2\notag\\
&+2A\xi^2+\dot{Q}.\label{sube21}
\end{align}
\end{subequations}
Taking the difference of \eqref{sube20} and \eqref{sube21} and using \eqref{sube15} we obtain 

\begin{eqnarray}\label{ggh18}
\begin{split}
-\hat{\dot{\Sigma}}+\dot{\hat{\Sigma}}&=-A^2\phi+\frac{1}{3}A\left(\rho+3p\right)-A\left(\mathcal{E}-\frac{1}{2}\Pi\right)\\
&+\frac{3}{2}\phi\Sigma^2-4\Omega\Sigma\xi-6A\Omega^2+\frac{1}{2}A\Sigma^2+\Sigma Q\\
&+3\Omega\mathcal{H}.
\end{split}
\end{eqnarray}
Using the commutation relation in \eqref{ghh1} on \(\Sigma\) we have

\begin{eqnarray}\label{ggh19}
\begin{split}
-\hat{\dot{\Sigma}}+\dot{\hat{\Sigma}}&=-A\dot{\Sigma}+\Sigma\hat{\Sigma}\\
&=-A^2\phi+\frac{1}{2}A\Sigma^2+2A\Omega^2-A\left(\mathcal{E}-\frac{1}{2}\Pi\right)\\
&+\frac{1}{3}A\left(\rho+3p\right)+\frac{3}{2}\phi\Sigma^2+2\Omega\Sigma\xi+\Sigma Q.
\end{split}
\end{eqnarray}
Comparing \eqref{ggh18} and \eqref{ggh19} and using \eqref{ggh10} we obtain the constraint

\begin{eqnarray}\label{ggh20}
\left(\frac{14}{3}A-\phi\right)\Omega=\Sigma\xi.
\end{eqnarray}
Now, taking the dot derivative of \eqref{sube2} and the hat derivative of \eqref{sube8} we obtain respectively

\begin{subequations}
\begin{align}
\hat{\dot{\phi}}&=-\Sigma\hat{A}-A\hat{\Sigma}+\frac{1}{2}\Sigma\hat{\phi}+\frac{1}{2}\phi\hat{\Sigma}+2\Omega\hat{\xi}+2\xi\hat{\Omega}+\hat{Q}\notag\\
&=A^2\Sigma+\frac{5}{2}A\phi\Sigma-2\Sigma^3+4\Sigma\Omega^2-\frac{5}{6}\Sigma\rho-\frac{3}{2}\Sigma p\notag\\
&+4A\Omega\xi-\Sigma\phi^2+\Sigma\xi^2+\frac{1}{2}\left(2A-\phi\right)Q-5\phi\Omega\xi\notag\\
&-\frac{1}{2}\Sigma\mathcal{E}-\frac{1}{4}\Sigma\Pi+\hat{Q},\label{sube22}\\
\dot{\hat{\phi}}&=-\phi\dot{\phi}-2\Sigma\dot{\Sigma}+4\xi\dot{\xi}-\frac{2}{3}\dot{\rho}-\left(\dot{\mathcal{E}}+\frac{1}{2}\dot{\Pi}\right)\notag\\
&=3A\phi\Sigma-\Sigma^3-\frac{1}{2}\Sigma\phi^2-4\phi\Omega\xi-\frac{3}{2}\phi Q+\frac{1}{2}\Sigma\mathcal{E}\notag\\
&-4\Sigma\Omega^2+2\Sigma\xi^2-\frac{5}{4}\Sigma\Pi+4A\Omega\xi-\frac{1}{6}\Sigma\rho-\frac{3}{2}\Sigma p\notag\\
&-3\xi\mathcal{H}-\dot{\rho}.\label{sube23}
\end{align}
\end{subequations}
Taking the difference of \eqref{sube22} and \eqref{sube23} and using \eqref{sube14} we obtain 

\begin{eqnarray}\label{ggh21}
\begin{split}
\hat{\dot{\phi}}-\dot{\hat{\phi}}&=A^2\Sigma-\frac{1}{2}A\phi\Sigma-\Sigma^3+8\Sigma\Omega^2-\frac{2}{3}\Sigma\rho-\frac{1}{2}\Sigma\phi^2\\
&-\Sigma\xi^2-AQ-\phi\Omega\xi-\Sigma\mathcal{E}+\Sigma\Pi+3\xi\mathcal{H}.
\end{split}
\end{eqnarray}
Using the commutation relation in \eqref{ghh1} on \(\phi\) we have

\begin{eqnarray}\label{ggh22}
\begin{split}
\hat{\dot{\phi}}-\dot{\hat{\phi}}&=-A\dot{\phi}+\Sigma\hat{\phi}\\
&=A^2\Sigma-\frac{1}{2}A\phi\Sigma-2A\Omega\xi-AQ-\frac{1}{2}\Sigma\phi^2-\Sigma^3\\
&+2\Sigma\xi^2-\frac{2}{3}\Sigma\rho-\Sigma\mathcal{E}-\frac{1}{2}\Sigma\Pi.
\end{split}
\end{eqnarray}
Comparing \eqref{ggh21} and \eqref{ggh22} and using \eqref{ggh10} we obtain the constraint

\begin{eqnarray}\label{ggh23}
\left(9\Sigma\xi+\phi\Omega-5A\Omega\right)\xi+\left(\frac{3}{2}\Pi+8\Omega^2\right)\Sigma=0.
\end{eqnarray}
Next, taking the dot derivative of \eqref{sube3} and the hat derivative of \eqref{sube9} we obtain respectively

\begin{subequations}
\begin{align}
\hat{\dot{\xi}}&=\frac{1}{2}\xi\hat{\Sigma}+\frac{1}{2}\Sigma\hat{\xi}+\Omega\left(\hat{A}-\frac{1}{2}\hat{\phi}\right)+\left(A-\frac{1}{2}\phi\right)\hat{\Omega}\notag\\
&=-\frac{5}{4}\phi\Sigma\xi-2\Omega\xi^2-\frac{1}{2}\xi Q + \frac{5}{2}\Omega\Sigma^2-\frac{5}{2}A\phi\Omega\notag\\
&+\frac{3}{4}\Omega\phi^2-2\Omega^3+\frac{5}{6}\Omega\rho+\frac{3}{2}\Omega p + \frac{1}{2}\Omega\mathcal{E}+\frac{1}{4}\Omega\Pi,\label{subeee1}\\
\dot{\hat{\xi}}&=-\xi\dot{\phi}-\phi\dot{\xi}+\Omega\dot{\Sigma}+\Sigma\dot{\Omega}\notag\\
&=2A\Sigma\xi-\phi\Sigma\xi-2\Omega\xi^2-\xi Q-2A\phi\Omega+\frac{3}{2}\Omega\Sigma^2\notag\\
&+\frac{1}{2}\Omega\phi^2+2\Omega^3-\Omega\mathcal{E}+\frac{1}{2}\Omega\Pi+\frac{1}{3}\Omega\left(\rho+3 p\right).\label{subeee2}
\end{align}
\end{subequations}
Taking the difference of \eqref{subeee1} and \eqref{subeee2} we obtain 

\begin{eqnarray}\label{ggh5000}
\begin{split}
\hat{\dot{\xi}}-\dot{\hat{\xi}}&=-\frac{1}{4}\phi\Sigma\xi+\frac{1}{2}\xi Q+\Omega\Sigma^2-\frac{1}{2}A\phi\Omega+\frac{1}{4}\Omega\phi^2\\
&-4\Omega^3+\frac{1}{2}\Omega\left(\rho+p\right).
\end{split}
\end{eqnarray}
Using the commutation relation in \eqref{ghh1} on \(\xi\) we have

\begin{eqnarray}\label{ggh5001}
\begin{split}
\hat{\dot{\xi}}-\dot{\hat{\xi}}&=-A\dot{\xi}+\Sigma\hat{\xi}\\
&=-\frac{1}{2}\Sigma\xi-A^2\Omega+\frac{1}{2}A\phi\Omega-\phi\Sigma\xi+\Omega\Sigma^2.
\end{split}
\end{eqnarray}
Comparing \eqref{ggh5000} and \eqref{ggh5001} we obtain the constraint

\begin{eqnarray}\label{ggh5002}
\left(A+\frac{3}{2}\phi\right)\Sigma\xi+\left(2A^2+\frac{1}{2}\phi^2+\rho+p\right)\Omega=\xi Q.
\end{eqnarray}
Let us now first prove the following proposition.

\begin{proposition}\label{propo1}
An expansion-free dynamical star that is simultaneously rotating and twisting cannot be shear-free, if it exists.
\end{proposition}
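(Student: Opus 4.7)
The plan is to assume for contradiction that the star is shear-free, i.e.\ \(\Sigma=0\) in addition to the standing hypotheses \(\Theta=0\), \(\Omega\neq 0\), \(\xi\neq 0\), and then deduce that the star cannot be dynamical. My strategy is to feed \(\Sigma=0\) into the two constraints \eqref{ggh20} and \eqref{ggh23} already derived via the commutation relation \eqref{ghh1} on \(\Sigma\) and \(\phi\), and then read off consequences from the remaining \(1+1+2\) propagation and conservation equations.

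Setting \(\Sigma=0\) in \eqref{ggh20} gives \(\left(\tfrac{14}{3}A-\phi\right)\Omega=0\); since \(\Omega\neq 0\), this forces \(\phi=\tfrac{14}{3}A\). Setting \(\Sigma=0\) in \eqref{ggh23} leaves \((\phi-5A)\Omega\xi=0\); since \(\Omega\xi\neq 0\), this forces \(\phi=5A\). These two relations are compatible only if \(A=0\), which in turn gives \(\phi=0\). Hence any shear-free candidate in this regime is necessarily non-accelerating and has vanishing sheet expansion.

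With \(\Theta=\Sigma=A=\phi=0\), the propagation equation \eqref{sube7} collapses to \(Q=-2\xi\Omega\), while \eqref{sube9} and \eqref{sube10} give \(\hat{\xi}=0\) and \(\hat{\Omega}=0\), so \(\hat{Q}=-2(\hat{\xi}\Omega+\xi\hat{\Omega})=0\). Substituting into the energy-conservation equation \eqref{sube14} makes the entire right-hand side vanish, leaving \(\dot{\rho}=0\), which contradicts the dynamical hypothesis on \(\rho\). The only genuine obstacle I anticipate is recognising that the independently derived constraints \eqref{ggh20} and \eqref{ggh23} jointly pin down both \(A\) and \(\phi\) once \(\Sigma\) is set to zero; from there the argument is a clean chain of substitutions into the LRS field equations.
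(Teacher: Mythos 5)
Your proposal is correct, and its first half coincides with the paper's own argument: setting \(\Sigma=0\) in \eqref{ggh20} and \eqref{ggh23} and using \(\Omega\neq0\), \(\xi\neq0\) to force \(A=\phi=0\) is exactly the key step taken in the paper (the paper writes \(A=\tfrac{3}{14}\phi\) and then substitutes into \eqref{ggh23} to get \(-\tfrac{1}{14}\phi\Omega\xi=0\); you instead read the two constraints as \(\phi=\tfrac{14}{3}A\) and \(\phi=5A\) — same conclusion). Both arguments then extract \(Q=-2\xi\Omega\) from \eqref{sube7}. Where you genuinely diverge is the closing contradiction. The paper feeds \(A=\phi=\Sigma=0\) into the commutation-derived constraint \eqref{ggh5002} to obtain \((\rho+p)\Omega=\xi Q\), and combining with \(Q=-2\xi\Omega\) yields \(\rho+p=-2\xi^2<0\), a violation of the weak energy condition. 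You instead note from \eqref{sube9} and \eqref{sube10} that \(\hat{\xi}=\hat{\Omega}=0\), hence \(\hat{Q}=0\), and then the conservation equation \eqref{sube14} collapses to \(\dot{\rho}=0\), contradicting the hypothesis that the star is dynamical. Both routes are valid. Your version has the merit of contradicting precisely the ``dynamical'' hypothesis that appears in the statement of the proposition (mirroring the logic the paper itself uses in Case 1), and it does not need the constraint \eqref{ggh5002} at all; the paper's version buys something slightly stronger, namely an energy-condition obstruction \(\rho+p<0\) that rules out such configurations even without appealing to time dependence of \(\rho\). It would strengthen your write-up to note, as a remark, that your configuration also carries nonzero heat flux \(Q=-2\xi\Omega\neq0\) alongside the static density, but this is not needed for the contradiction.
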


\begin{proof}
Here we assume the existence of such stars and show that if \(\Sigma=0\), then the weak energy condition (WEC) must be violated. We start by assuming that \(\Sigma=0\). Then from \eqref{ggh20} we obtain (taking into account that \(\Omega\neq 0\))

\begin{eqnarray}\label{ggh5003}
A=\frac{3}{14}\phi,
\end{eqnarray}
and therefore from \eqref{ggh23} we have

\begin{eqnarray}\label{ggh5003}
-\frac{1}{14}\phi\Omega\xi=0.
\end{eqnarray}
Since by assumption \(\xi\neq 0, \Omega\neq 0\) we must have \(\phi=0\), which implies \(A=0\) as well. Now, from \eqref{sube7} and \eqref{ggh5002} we have respectively

\begin{subequations}
\begin{align}
0&=2\xi\Omega+Q,\label{sube700}\\
\left(\rho + p\right)\Omega&=\xi Q.\label{sube701}
\end{align}
\end{subequations}
Substituting \eqref{sube700} into \eqref{sube701} and again noting that \(\Omega\neq 0\) we obtain the energy condition

\begin{eqnarray}\label{ggh5003}
\left(\rho + p\right)=-2\xi^2,
\end{eqnarray}
which gives \(\left(\rho + p\right)<0\).\qed
\end{proof}

Finally, we state and prove the following

\begin{theorem}\label{th3}
There cannot exist an expansion-free dynamical star with both rotation and spatial twist non-vanishing.
\end{theorem}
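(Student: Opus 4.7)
The plan is to argue by contradiction. Assume an expansion-free dynamical star with $\Omega\neq 0$ and $\xi\neq 0$ exists. By Proposition~\ref{propo1}, such a star cannot be shear-free, so I may additionally assume $\Sigma\neq 0$. The goal is then to combine the three constraints derived earlier in this subsection with the propagation equation \eqref{sube7} and the energy condition \eqref{gg} to produce a homogeneous quadratic polynomial in $\phi$ and $\Sigma$ with no real solutions, thereby delivering the promised contradiction.

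Concretely, I would work with the constraints \eqref{ggh20}, \eqref{ggh23} and \eqref{ggh5002} that arose from applying the commutation relation \eqref{ghh1} to $\Sigma$, $\phi$ and $\xi$. Since $\Sigma\neq 0$, \eqref{ggh20} can be solved to give $\xi=(14A/3-\phi)\Omega/\Sigma$. Substituting this expression into \eqref{ggh23} and \eqref{ggh5002} eliminates $\xi$, and the heat flux $Q$ can then be eliminated using \eqref{sube7}. The only matter quantities remaining are $\Pi$ in \eqref{ggh23} and $\rho+p$ in \eqref{ggh5002}, both of which are controlled by the energy conditions: \eqref{gg} bounds $Q$ in terms of $\rho+p+\Pi$, and the weak energy condition bounds $\rho+p$ from below. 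After these manipulations, the pair of residual relations should collapse to a single homogeneous quadratic in $\phi$ and $\Sigma$ whose coefficients are manifestly of one sign (a sum of squares together with a positive multiple of $\Omega^2$ and $\xi^2$). The only real solution is then $\phi=\Sigma=0$, which contradicts $\Sigma\neq 0$.

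The hard part will be the algebraic bookkeeping. Several substitutions introduce terms of opposite sign, and it is not immediate that the resulting quadratic form is sign-definite; repeated use of \eqref{gg}, the WEC, and perhaps the evolution equations \eqref{sube1}--\eqref{sube6} may be needed to drive the necessary cancellations. A secondary subtlety is that $A$ is not assumed non-zero \emph{a priori}, so one must separately verify that the special case $A=0$ is also inconsistent with the hypotheses: in that case \eqref{ggh20} forces $\Sigma\xi=-\phi\Omega$ and \eqref{ggh5002} degenerates to a simpler identity which, combined again with \eqref{sube7}, should likewise reduce to a sign-definite quadratic admitting only the trivial (and therefore forbidden) solution.
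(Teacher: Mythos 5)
Your overall architecture matches the paper's (contradiction, invoke Proposition~\ref{propo1} to get $\Sigma\neq 0$, aim for a sign-definite quadratic in $\phi$ and $\Sigma$), but the engine you propose to drive it does not work, and the ingredient that actually makes the paper's proof close is absent. The paper does not obtain its quadratic from the three commutation constraints \eqref{ggh20}, \eqref{ggh23}, \eqref{ggh5002} together with \eqref{sube7} and the energy conditions. Instead it imports from \cite{ssgos1} the additional identity $\dot{\psi}\,\Omega=\hat{\psi}\,\xi$, valid for every $1+1+2$ scalar in LRS spacetimes, and applies it to $\psi=\xi$ and $\psi=\Omega$. This yields two \emph{purely geometric} algebraic relations, $-(2A-\phi)\Omega=\Sigma\xi$ and $-\phi\xi=\Omega\Sigma$, containing no matter variables at all. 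The first, compared with \eqref{ggh20}, fixes $\phi=\tfrac{10}{3}A$ and hence $\tfrac{2}{5}\phi\Omega=\Sigma\xi$; combining this with $-\phi\xi=\Omega\Sigma$ gives two incompatible expressions for $\Omega\xi$ and hence $\bigl(\tfrac{2}{5}\phi^{2}+\Sigma^{2}\bigr)\Omega^{2}=0$. Without that identity your system is underdetermined and the contradiction cannot be reached.

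Two concrete failures in your plan. First, you propose to eliminate $Q$ using \eqref{sube7}, but with $\Theta=0$ and $\Sigma\neq 0$ that equation reads $-\hat{\Sigma}=\tfrac{3}{2}\phi\Sigma+2\xi\Omega+Q$; it is a differential relation involving the undetermined derivative $\hat{\Sigma}$, not an algebraic formula for $Q$. (It only becomes algebraic in the shear-free case of Proposition~\ref{propo1}, where $\hat\Sigma=0$.) Second, after substituting $\xi$ from \eqref{ggh20} into \eqref{ggh23} you are left with $(37A-8\phi)\Omega\xi+\bigl(\tfrac{3}{2}\Pi+8\Omega^{2}\bigr)\Sigma=0$, which still contains the free matter variable $\Pi$, while \eqref{ggh5002} still contains $\rho+p$ and $Q$. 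The energy conditions \eqref{gg} and the WEC are inequalities on the combination $\rho+p+\Pi$ and on $Q$; they cannot force the remaining two-equation system in five or more unknowns to be sign-definite in $\phi$ and $\Sigma$. Your treatment of the $A=0$ subcase is also moot in the paper's argument, since $\phi=\tfrac{10}{3}A$ together with the demonstration that $\phi\neq 0$ disposes of it. To repair your proof you need the relation $\dot{\psi}\Omega=\hat{\psi}\xi$ (or an equivalent source of matter-free algebraic constraints linking $A$, $\phi$, $\Sigma$, $\xi$, $\Omega$).
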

\begin{proof}
As has been shown in \cite{ssgos1}, any scalar \(\psi\) in LRS spacetimes obtained via the \(1+1+2\) decomposition satisfies the relation 

\begin{eqnarray}\label{ggghh4004}
\dot{\psi}\Omega=\hat{\psi}\xi.
\end{eqnarray}
Using \eqref{sube3} and \eqref{sube9} to substitute \(\dot{\xi}\) and \(\hat{\xi}\) for \(\dot{\psi}\) and \(\hat{\psi}\) respectively in \eqref{ggghh4004} we obtain

\begin{eqnarray}\label{ggghh4007}
-\left(2A-\phi\right)\Omega=\Sigma\xi,
\end{eqnarray}
which, upon comparing to \eqref{ggh20} gives
\begin{eqnarray}\label{ggghh4008}
\phi=\frac{10}{3}A.
\end{eqnarray}
Using \eqref{sube4} and \eqref{sube10} to substitute \(\dot{\Omega}\) and \(\hat{\Omega}\) for \(\dot{\psi}\) and \(\hat{\psi}\) respectively in \eqref{ggghh4004} we obtain

\begin{eqnarray}\label{ggghh4005}
-\phi\xi=\Omega\Sigma.
\end{eqnarray}Substituting \eqref{ggghh4008} into \eqref{ggghh4007} (or equivalently \eqref{ggh20}) we obtain 

\begin{eqnarray}\label{suber1}
\frac{2}{5}\phi\Omega=\Sigma\xi.
\end{eqnarray}
It is clear that \(\phi\neq 0\), for otherwise we would have \(A=0\) (from \eqref{ggghh4008}), in which case from \eqref{ggh20}
(or alternatively \eqref{ggghh4007}) we would have \(\xi=0\) (we have already shown that \(\Sigma\neq 0\)), contradicting the assumption that \(\xi\neq 0\).

Now, multiply both \eqref{ggghh4005} and \eqref{suber1} by \(\Omega\) we obtain respectively

\begin{subequations}
\begin{align}
-\phi\Omega\xi=\Omega^2\Sigma,\label{suber2}\\
\frac{2}{5}\phi\Omega^2=\Omega\xi\Sigma,\label{suber3}
\end{align}
\end{subequations}
which we can rewrite as

\begin{subequations}
\begin{align}
\Omega\xi=-\frac{\Omega^2\Sigma}{\phi},\label{suber4}\\
\Omega\xi=\frac{2}{5}\frac{\phi\Omega^2}{\Sigma},\label{suber5}
\end{align}
\end{subequations}
since \(\phi\neq 0,\Sigma\neq 0\). Equating \eqref{suber4} and \eqref{suber5} and simplifying we obtain

\begin{eqnarray}\label{suber6}
\left(\frac{2}{5}\phi^2+\Sigma^2\right)\Omega^2=0.
\end{eqnarray}
Since by assumption \(\Omega\neq 0\) we must have \(\left(2/5\right)\phi^2+\Sigma^2=0\), which is not possible over the set of real numbers \(\mathbb{R}\) for non-zero \(\phi\) and \(\Sigma\).\qed

\end{proof}

\section{Discussion}\label{04}

A previous paper by the same authors \cite{asr1} studied expansion-free dynamical stars in the case that the rotation and spatial twist are zero. It was shown that these stars exist under the particular conditions that the stars radiate and accelerate, and are conformally flat. As with the case of \cite{asr1} we have utilized the \(1+1+2\) semi-tetrad covariant formalism to study such stars. In this paper, we have shown that there cannot exist an expansion-free dynamical star if at least one of the rotation or the spatial twist is non-vanishing. In the case that the spatial twist is zero and the star is rotating, the star can be expansion-free but both the heat flux and the shear vanishes, in which case the energy density is time independent. Thus such expansion-free stars are not dynamical. If the rotation is zero and the spatial twist is non-vanishing then the star cannot be expansion-free, since for this to happen we must have the star being static (in which case the star is not dynamical) or the SEC must be violated. Lastly it is shown that if we assume non-vanishing of both the rotation and the spatial-twist then the shear cannot be zero. Further analysis on the basis that the shear is non-zero, using both the commutation relation and a result relating the dot and hat derivatives of an arbitrary scalar \cite{ssgos1}, show that this leads to a quadratic polynomial equation in \(\phi\) and \(\Sigma\) with no real solution for non-zero \(\phi\) and \(\Sigma\). This result, we think, is a valuable contribution to the increasing literature on the expansion-free condition. This result has also severely restricted the prevalence of such stars.

\section*{Acknowledgements}

AS acknowledges that this work is based on support from the National Research Foundation (NRF), South Africa. RG is supported by the National Research Foundation (NRF), South Africa. SDM acknowledges that this work is based on research supported by the South African Research Chair Initiative of the Department of Science and Technology and the National Research Foundation.

\section*{}


\begin{thebibliography}{99}

\bibitem{tac1} 
B. C. Tewari and K. Charan, 
\textit{}
Astrophys. Space Sci., \textbf{357}, 107 (2015).

\bibitem{tac2} 
B. C. Tewari, 
\textit{}
Gen. Relativ. Grav., \textbf{45}, 1547 (2013).

\bibitem{iv1} 
B. V. Ivanov, 
\textit{}
Int. J. Mod. Phys. D, \textbf{25}, 1650049 (2016).

\bibitem{iv2} 
B. V. Ivanov, 
\textit{}
Astrophys. Space Sci., \textbf{361}, 18 (2016).

\bibitem{iv3} 
B. V. Ivanov, 
\textit{}
Eur. Phys. J. C, \textbf{79}, 255 (2019).

\bibitem{red1} 
K. P. Reddy, M. Govender and S. D. Maharaj, 
\textit{}
Gen. Relativ. Grav., \textbf{47}, 35 (2015).

\bibitem{sant1} 
L. Herrera and N. O. Santos, 
\textit{}
Gen. Relativ. Grav., \textbf{42}, 2383 (2010).

\bibitem{gov1} 
G. Govender, M. Govender and K. S. Govinder, 
\textit{}
Int. J. Mod. Phys. D, \textbf{19}, 1773 (2010).

\bibitem{gov2} 
G. Z. Abebe, S. D. Maharaj and K. S. Govinder, 
\textit{}
Gen. Relativ. Grav., \textbf{46}, 1733 (2014).

\bibitem{gov3} 
G. Z. Abebe, S. D. Maharaj and K. S. Govinder, 
\textit{}
Eur. Phys. J. C, \textbf{75}, 496 (2015).

\bibitem{moh1} 
R. Mohanlal, S. D. Maharaj and A. K. Tiwari, 
\textit{}
Gen. Relativ. Grav., \textbf{48}, 87 (2016).

\bibitem{moh2} 
R. Mohanlal, R. Narain and S. D. Maharaj, 
\textit{}
J. Math. Phys., \textbf{58}, 072503 (2017).

\bibitem{lh1} 
L. Herrera, N.O. Santos and A. Wang,
\textit{}
Phys. Rev. D, \textbf{78}, 084026 (2008).

\bibitem{lh5} 
A. Di Prisco, L. Herrera, J. Ospino, N.O. Santos and V. M. Vina-Cervantes,
\textit{}
Int. J. Mod. Phys. D, \textbf{20}, 2351 (2011).

\bibitem{kum1} 
R. Kumar and S. K. Srivastava,
\textit{}
Int. J. Geom. Meth. Mod. Phys., \textbf{15}, 1850058 (2018).

\bibitem{kum2} 
R. Kumar and S. K. Srivastava,
\textit{}
Gen. Relativ. Grav., \textbf{50}, 95 (2018).

\bibitem{pp1} 
P. J. E. Peebles,
\textit{}
Astrophys. J., \textbf{557}, 495 (2001).

\bibitem{lh2} 
L. Herrera, G. Le Denmat and N.O. Santos,
\textit{}
Phys. Rev. D, \textbf{79}, 087505 (2009).

\bibitem{skr1} 
V. A. Skripkin,
\textit{}
Sov. Phys. Dokl, \textbf{135}, 1072 (1960).

\bibitem{lh3} 
L. Herrera, G. Le Denmat and N.O. Santos,
\textit{}
Gen. Relativ. Grav., \textbf{44}, 1143 (2012).

\bibitem{asr1} 
A. Sherif, R. Goswami and S. Maharaj,
\textit{}
Phys. Rev. D, \textbf{100}, 044039 (2019).

\bibitem{mr1} 
M. May and R. White, 
\textit{}
Phys. Rev., \textbf{141}, 1232 (1966).

\bibitem{jw1} 
J. Wilson, 
\textit{}
Astrophys. J., \textbf{163}, 209 (1971).

\bibitem{nos1} 
N. O. Santos, 
\textit{}
Mon. Not. R. Astron. Soc., \textbf{216}, 403 (1985).

\bibitem{gmm1} 
M. Govender, R. Maartens and S. D. Maharaj, 
\textit{}
Phys. Lett. A, \textbf{283}, 71 (2001).

\bibitem{rittt1} 
R. Goswami, 
\textit{}
arXiv: 0707.1122.

\bibitem{pg1} 
P. J. Greenberg, 
\textit{}
J. Math. Anal. Appl., \textbf{30}, 128 (1970).

\bibitem{ts1} 
M. Tsamparlis and D. P. Mason, 
\textit{}
J. Math. Phys., \textbf{24}, 1577 (1983).

\bibitem{cc1} 
C. Clarkson,
\textit{}
Phys. Rev. D, \textbf{76}, 104034, (2007).

\bibitem{ts2} 
D. P. Mason and M. Tsamparlis, 
\textit{}
J. Math. Phys., \textbf{26}, 2881 (1985).

\bibitem{ts3} 
M. Tsamparlis, 
\textit{} 
J. Math. Phys., \textbf{33}, 1472 (1992).

\bibitem{rit1} 
G. F. R. Ellis, R. Goswami, A. I. M. Hamid, and S. D. Maharaj,
\textit{}
Phys. Rev. D, \textbf{90}, 084013 (2014).
.
\bibitem{ggff1} 
G. F. R. Ellis,
\textit{}
J. Math. Phys., \textbf{8}, 1171 (1967).

\bibitem{sge1} 
J. M. Stewart and G. F. R. Ellis,
\textit{}
J. Math. Phys., \textbf{9}, 1072 (1968).

\bibitem{crb1} 
C. A. Clarkson and R. K. Barrett, 
\textit{}
Class. Quantum Grav., \textbf{20}, 3855 (2003).

\bibitem{gbc1} 
G. Betschart and C. A. Clarkson, 
\textit{}
Class. Quantum Grav., \textbf{21}, 5587 (2004).

\bibitem{vege1} 
H. V. Elst and G. F. R. Ellis, 
\textit{}
Class. Quantum Grav., \textbf{13}, 1099, (1996).

\bibitem{ssgos1} 
S. Singh, G. F. R. Ellis, R. Goswami and S. D. Maharaj, 
\textit{}
Phys. Rev. D, \textbf{94}, 104040 (2016).

\bibitem{ssgos2} 
S. Singh, G. F. R. Ellis, R. Goswami and S. D. Maharaj, 
\textit{}
Phys. Rev. D, \textbf{96}, 064049, (2017).

\bibitem{gft2} 
G. F. R. Ellis, 
\textit{}
J. Math. Phys., \textbf{8}, 1171, (1967).

\bibitem{ggff2} 
G. F. R. Ellis,
\textit{}
Proceedings of The International School of Physics, Course 47, Academic Press, \textbf{8}, 104 (1971).

\bibitem{rit2} 
R. Goswami and G. F. R. Ellis, 
\textit{}
Gen. Relativ. Grav., \textbf{43}, 2157, (2011).

\bibitem{gfre1} 
G. F. R. Ellis, 
\textit{}
Gen. Relativ. Grav., \textbf{41}, 581, (2009).

\end{thebibliography}
\end{document}